\newtheorem{theorem}{Theorem}[section]
\newtheorem{proposition}[theorem]{Proposition}
\newtheorem{example}[theorem]{Example}
\theoremstyle{definition}
\newtheorem{definition}[theorem]{Definition}
\theoremstyle{remark}
\numberwithin{equation}{section}
\begin{document}
\title{ Additive Tridiagonal Codes over $\mathbb{F}_{4}$}

\author{N. Annamalai\\
Assistant Professor\\
Indian Institute of Information Technology Kottayam\\
Pala-686635, Kerala, India\\
{Email: algebra.annamalai@gmail.com}
\bigskip\\
Anandhu Mohan\\
Research Scholar\\
Indian Institute of Information Technology Kottayam\\
Pala-686635, Kerala, India\\
{Email: anandhum.phd203002@iiitkottayam.ac.in }
\bigskip\\
C. Durairajan\\
Associate  Professor\\
Department of Mathematics\\ 
School of Mathematical Sciences\\
Bharathidasan University\\
Tiruchirappalli-620024, Tamil Nadu, India\\
{Email: cdurai66@rediffmail.com}
\hfill \\
\hfill \\
\hfill \\
\hfill \\
{\bf Proposed running head:} Additive Tridiagonal Codes over $\mathbb{F}_{4}$}
\date{}
\maketitle

\newpage

\vspace*{0.5cm}
\begin{abstract} 
In this paper, we introduce a additive Tridiagonal and Double-Tridiagonal codes over $\mathbb{F}_4$ and then we study the properties of the code. Also, we find the number of additive Tridiagonal codes over $\mathbb{F}_4.$ Finally, we study the applications of Double-Tridiagonal codes to secret sharing scheme based on matrix projection.
\end{abstract}
\vspace*{0.5cm}

{\it Keywords:} Additive Tridiagonal codes, Double-Tridiagonal Codes, Secret Sharing Scheme.

{\it 2000 Mathematical Subject Classification:}  94B05, 94A62, 94A15.
\vspace{0.5cm}
\vspace{1.5cm}

\section{Introduction}
 Denote the finite field of order 2 by $\mathbb{F}_2$  and the finite field of order 4 by $\mathbb{F}_4 = \{0, 1, w, w^2\},$ where $w^2+ w + 1=0.$
If $\mathbb{F}$ is any field, we define a code of length $n$ over $\mathbb{F}$ to be a subset of the space $\mathbb{F}^n.$  A code is said to be {\it linear} over a field $\mathbb{F}$ if the code is a linear subspace of $\mathbb{F}^n.$ A code $\mathcal{C}$ is said to be an {\it additive code} over $\mathbb{F}$ if it is a subgroup of $\mathbb{F}^n,$ this means that scalar multiples of the codewords do not necessarily belong to the code. We note that for binary codes these two concepts are identical, but for the field of order 4, a code is an additive code over $\mathbb{F}_4$ without being linear. Additive codes over the field $\mathbb{F}_4$ have found numerous applications including being used in $\cite{cal}$ for quantum error-correction and in $\cite{kim}$ for the construction of secret sharing schemes. Algebraic structure of additive conjucyclic codes over $\mathbb{F}_4$ were discussed in \cite{taher}.

An additive code $\mathcal{C}$ over $\mathbb{F}_4$ of length $n$ is an additive subgroup of $\mathbb{F}_4^n.$ The code $\mathcal{C}$ contains $2^k$ codewords for some
$0 \leq k \leq 2n,$ and can be defined by a $k \times n$ generator matrix, with entries from $\mathbb{F}_4,$ whose rows span $\mathcal{C}$ additively. We call $\mathcal{C}$ an $(n, 2^k)$ code.

The Hamming weight of $u \in \mathbb{F}_4^n,$ denoted by $wt(u),$ is the number of nonzero components of $u.$ The Hamming distance between $u$ and $v$ is $wt(u - v).$ The minimum distance of the code $\mathcal{C}$ is the minimal Hamming distance between any two distinct codewords of $\mathcal{C}.$ Since $\mathcal{C}$ is an additive code, the minimum distance is also given by the smallest nonzero weight of any  codeword in $\mathcal{C}.$ An additive code with minimum distance $d$ is called an $(n, 2^k, d)$ code. It follows from the Singleton bound $\cite{pless}$ that any additive $(n, 2^n, d)$ code over $\mathbb{F}_4$ must satisfy
$$d\leq\Big\lfloor \frac{n}{2}\Big\rfloor+1.$$

If a code attains the minimum distance $d$ given by the Singleton bound, it is called an {\it extremal code}. If a code has highest possible minimum distance, but is not extremal, it is called an {\it optimal code}, denoted by $d_{max}.$ If a code has minimum distance $d_{max} - 1,$ it is called {\it near-optimal}.

We say that two additive codes $\mathcal{C}_1$ and $\mathcal{C}_2$ over $\mathbb{F}_4$ are {\it equivalent} provided there is a map sending the codewords of $\mathcal{C}_1$ onto the codewords of $\mathcal{C}_2$ where the map consists of a permutation of coordinates (or columns of the generator matrix), followed by a scaling of coordinates by nonzero elements of $\mathbb{F}_4,$ followed by conjugation of some of the coordinates. The conjugation of $x \in \mathbb{F}_4$ is defined by $\overline{x}= x^2.$ That is, $\overline{0}=0, \overline{1}=1, \overline{w}=1+w$ and $\overline{1+w}=w.$

Recall that for any $x \in\mathbb{F}_4,$ we have that $Tr(x) =x +\overline{x}.$ The trace function is a function from $\mathbb{F}_4$ to $\mathbb{F}_2.$ Let $u =(u_1, u_2,\cdots, u_n), v = (v_1, v_2, \cdots, v_n) \in \mathbb{F}_4^n.$ Then the trace inner-product is defined by
$$\langle u, v\rangle = Tr([u, v])$$
where $[u, v]$ is the standard inner product and
the Hermitian trace inner product of two vectors over $GF(4)$ of length $n$ is given by
$$u * v = Tr(u \cdot \overline{v}) =\sum_{i=1}^n Tr(u_i\overline{v_i}) = \sum_{i=1}^n(u_iv_i^2 + u_i^2v_i) \,(mod 2).$$
We define the dual of the code $\mathcal{C}$ with respect to the Hermitian trace inner product,
$$\mathcal{C}^{\perp} = \{u \in GF(4)^n \mid u * c = 0 \text{ for all } c \in \mathcal{C}\}.$$

The trace dual code with respect to trace inner-product is defined  by
$$\mathcal{C}^{Tr} = \{v \mid \langle v, w \rangle  = 0, \text{ for all } \, w \in C\}.$$

A block code will be called {\it reversible} if the block of digits formed by reversing the order of the digits in a codeword is always another codeword in the same code. That is, if $(c_1,c_2, \cdots, c_n)\in \mathcal{C},$ then $(c_n, c_{n-1}, \cdots, c_1)\in \mathcal{C}.$ 

It is well-known $\cite{cal}$ that additive self-orthogonal codes over $\mathbb{F}_4$ can be used to represent a class of quantum error-correcting codes. Several papers (for example $\cite{cal},$ $\cite{par},$ $\cite{gab},$ $\cite{gul},$ $\cite{hon},$ $\cite{var}$ ) were devoted to classifying or constructing additive self-dual codes over $\mathbb{F}_4.$  

Danielsen and Parker $\cite{dan}$ showed that additive $(n, 2^n)$ codes over $\mathbb{F}_4,$ except for some special cases, have representations as directed graphs. To check whether two
additive codes over $\mathbb{F}_4$ are equivalent, they used a modified version of an algorithm originally devised by Ostergard $\cite{ost}$ for checking equivalence of linear codes. By
using this algorithm, and the fact that codes correspond to directed graphs, they classified additive $(n, 2^n)$ codes over $\mathbb{F}_4$ of length up to 7. Danielsen and Parker
$\cite{dan}$ studied additive circulant codes over $\mathbb{F}_4.$ 

A directed graph is a pair $G = (V, E)$ where $V$ is a set of vertices and $E \subseteq V \times V$ is a set of ordered pairs called edges. A graph with $n$ vertices can be represented by an $n\times n$ adjacency
matrix $\Gamma=(\gamma_{ij})$ where $\gamma_{ij} = 1$ if $(i, j )\in E$  and $\gamma_{ij} = 0$ otherwise.



A {\it directed graph code} is an additive $(n, 2^n)$ code over $\mathbb{F}_4$ that has a generator matrix of the form $A = \Gamma + w I$  where $\Gamma$ is the adjacency matrix of a  directed graph and $I$ is the identity matrix.
\begin{proposition}\cite{dan}
	Given a directed graph code $\mathcal{C}$ with generator matrix $A = \Gamma+w I,$ its dual
	$\mathcal{C}^{\perp}$ is generated by $A^T.$
\end{proposition}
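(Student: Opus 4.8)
The plan is to verify the two properties that characterize a generator matrix of $\mathcal{C}^{\perp}$: that the rows of $A^{T}$ lie in $\mathcal{C}^{\perp}$, and that $A^{T}$ generates a code of the right size. For the size I would first record that the Hermitian trace inner product $u*v=\sum_i Tr(u_i\overline{v_i})$ is a non-degenerate symmetric $\mathbb{F}_2$-bilinear form on $\mathbb{F}_4^{\,n}\cong\mathbb{F}_2^{\,2n}$, so that the dual of any additive $(n,2^{k})$ code is an additive $(n,2^{2n-k})$ code; in particular $\mathcal{C}^{\perp}$ is an $(n,2^{n})$ code. Next I would check that the $n$ rows of $A^{T}=\Gamma^{T}+wI$ are $\mathbb{F}_2$-linearly independent: writing each element of $\mathbb{F}_4$ uniquely as $a+bw$ with $a,b\in\mathbb{F}_2$, the ``$w$-part'' of the $i$-th row of $A^{T}$ is the $i$-th standard basis vector (since $\Gamma^{T}$ has entries in $\mathbb{F}_2$), so any $\mathbb{F}_2$-linear dependence among the rows would restrict to one among the standard basis vectors and hence be trivial. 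Thus $A^{T}$ generates an $(n,2^{n})$ code.

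It then remains to show that every row of $A^{T}$ is $*$-orthogonal to every row of $A$; once this is done, the code generated by $A^{T}$ is contained in $\mathcal{C}^{\perp}$, and equality follows from the cardinality count above. The key step I would isolate is a matrix reformulation of orthogonality: for $n\times n$ matrices $B$ and $A$ over $\mathbb{F}_4$, every row of $B$ is $*$-orthogonal to every row of $A$ if and only if every entry of $B\overline{A}^{T}$ lies in $\mathbb{F}_2$. Indeed, writing $X=B\overline{A}^{T}$ one has $\sum_{k}Tr(B_{ik}\overline{A_{jk}})=X_{ij}+\overline{X_{ij}}=Tr(X_{ij})$, and $Tr(x)=0$ precisely when $x\in\mathbb{F}_2$.

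Finally I would carry out the computation with $B=A^{T}$. Since $\Gamma$ is a $0$--$1$ matrix and $\overline{w}=w^{2}=1+w$, we have $\overline{A}=\Gamma+(1+w)I$, hence
$$A^{T}\overline{A}^{T}=(\Gamma^{T}+wI)\bigl(\Gamma^{T}+(1+w)I\bigr)=(\Gamma^{T})^{2}+\Gamma^{T}+I,$$
using $w+w=0$ and $w+w^{2}=1$ to collapse the cross terms. This matrix has all of its entries in $\mathbb{F}_2$, so every entry has trace zero; therefore the rows of $A^{T}$ and $A$ are mutually $*$-orthogonal and the proof is finished. I do not foresee a genuine obstacle: the only points demanding care are the bookkeeping of the conjugate-transpose in the Hermitian trace form and the invocation of the dimension identity $\dim_{\mathbb{F}_2}\mathcal{C}+\dim_{\mathbb{F}_2}\mathcal{C}^{\perp}=2n$ for trace-Hermitian duality.
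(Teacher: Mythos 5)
Your proof is correct and complete. The paper itself supplies no argument for this proposition---it is quoted from Danielsen and Parker \cite{dan}---so there is no in-text proof to compare against; your route (showing every entry of $A^{T}\overline{A}^{T}=(\Gamma^{T})^{2}+\Gamma^{T}+I$ lies in $\mathbb{F}_2$ and hence has zero trace, checking $\mathbb{F}_2$-independence of the rows of $A^{T}$ via their $w$-parts, and concluding from the dimension identity $\dim_{\mathbb{F}_2}\mathcal{C}+\dim_{\mathbb{F}_2}\mathcal{C}^{\perp}=2n$ for the non-degenerate trace-Hermitian form) is the standard argument for this result and contains no gaps.
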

Secret sharing scheme is distributing a secret to a set of participants, in such a way that only certain subsets of them can retrieve the secret. The set of all subsets of participant which are able to retrieve the secret is called the qualified group or access structure of the scheme and those whose are unable is said to be unqualified. This concept was first proposed by G.R Blakley $\cite{blak}$ and A.Shamir $\cite{adi}$, independently in 1979, based on  $(n,m)$ threshold-secret sharing scheme for $n\leq{m}$. In a $(n,m)$ threshold secret sharing scheme, $n$ or more participants can reconstruct the  secret while $(n-1)$ or fewer will not be able to retrieve the secret. The Secret sharing scheme had grown to many branches and ramp secret sharing scheme is a pioneer among them, in which exposed information is proportional to the size of unqualified group.

In the $(n,k,m)$-threshold ramp secret sharing scheme, we can reconstruct the secret from $n$ or more shares, but no information about the secret can be obtained from $n-k$ or fewer shares. Moreover, any $n-l$ shares can recover the secret for $l= 1, 2, \cdots, k-1$. Various research papers were published on ramp secret sharing scheme $\cite{sri,pai,yama,san,mea,bai,mig}$ and ramp secret sharing scheme was able to reduced the size of shares to be distributed. For a matrix $H$, its projection is defined as
$Proj(H)=H(H^{T}H)^{-1}H^{T}$.  The scheme uses matrix projection invariance property to share multiple secrets.

The organization of the paper is as follows:
In section 2, we introduce and study a additive Tridiagonal codes and properties of this  codes over $\mathbb{F}_4.$  In section 3, we study Double-Tridiagonal codes and define a matrix projection of transpose of generator matrix $G$ of a Double-Tridiagonal code. In section 4, we study a $(n,m)$-threshold secret sharing scheme  based on Double-Tridiagonal code with an example.

\section{Additive Tridiagonal Codes over $\mathbb{F}_4$}
In this section, we define and construct additive Tridiagonal codes over $\mathbb{F}_4.$ Also, we find the number of additive Tridiagoal codes over $\mathbb{F}_4.$
\subsection{The Construction of Additive Tridiagonal Codes} 
Danielsen and Parker$\cite{dan}$ introduced the additive circulant codes over $\mathbb{F}_4.$ An additive $(n, 2^n)$ code $\mathcal{C}$ over
$\mathbb{F}_4$ with generator matrix
$$\begin{bmatrix}
	w&a_1&a_2&\cdots&a_{n-1}\\
	a_{n-1}&w&a_1&\cdots&a_{n-2}\\
	a_{n-2}&a_{n-1}&w&\cdots&a_{n-3}\\
	\vdots&\vdots&\vdots&\ddots&\vdots\\
	a_1&a_2&a_3&\cdots&w
\end{bmatrix}
$$
is called an {\it additive circulant code} where $a_i\in \{0, 1\} \subseteq \mathbb{F}_4$ for $1 \leq i \leq n - 1.$ The vector ${\bf a} = (w, a_1, a_2, \cdots, a_{n-1})$ is called a {\it generator vector} for the code $\mathcal{C}.$

Murat SahIn an Haryullah OzImamoglu$\cite{murat}$ generalized the additive circulant codes over $\mathbb{F}_4.$ They define, an additive $(n, 2^n)$ code $\mathcal{C}$ over $\mathbb{F}_4$ with generator matrix
$$\begin{bmatrix}
	w&a_1&a_2&\cdots&a_{n-1}\\
	b_{1}&w&a_1&\cdots&a_{n-2}\\
	b_{2}&b_{1}&w&\cdots&a_{n-3}\\
	\vdots&\vdots&\vdots&\ddots&\vdots\\
	b_{n-1}&b_{n-2}&b_{n-3}&\cdots&w
\end{bmatrix}
$$
is called an {\it additive Toeplitz code} where $a_i, b_i\in \{0, 1\} \subseteq \mathbb{F}_4$ for $1 \leq i \leq n - 1.$ The vector ${\bf a} = (w, a_1, a_2, \cdots, a_{n-1})$ is called an {\it upper generator vector}, the vector ${\bf b}=(w, b_1, b_2, \cdots, b_{n-1})$ is called a {\it lower generator vector} and the ordered pair ${\bf (a, b)}$ is called a {\it generator vector} for the code $\mathcal{C}.$

We define the additive Tridiagonal codes over $\mathbb{F}_4$ as follows:

\begin{definition} \label{def1}
	An additive $(n, 2^n)$ for $n\geq 3$ code $\mathcal{C}$ over $\mathbb{F}_4$ with generator matrix
	$$A=\begin{bmatrix}
		w&a_1&0&\cdots&0&0\\
		b_1&w&a_2&\cdots&0&0\\
		0&b_{2}&w&\cdots&0&0\\
		\vdots&\vdots&\vdots&\ddots&\vdots&\vdots\\
		0&0&0&\cdots&w&a_{n-1}\\
		0&0&0&\cdots&b_{n-1}&w
	\end{bmatrix}_{n\times n}$$
	is called an {\it additive Tridiagonal code} where $a_i, b_i \in \{0, 1\} \subset \mathbb{F}_4$ for $1 \leq i \leq n - 1.$
	
	The vector ${\bf a} = (w, a_1, a_2, \cdots, a_{n-1})$ is called an {\it upper generator vector}, the vector ${\bf b} = (w, b_1, b_2, \cdots, b_{n-1})$ is called a {\it lower generator vector} and the ordered pair
	${\bf (a, b)}$ is called a {\it generator vector} for the code $\mathcal{C}.$
\end{definition}

\begin{example}
	Let the generator matrix of an additive Tridiagonal code $\mathcal{C}$ over $\mathbb{F}_4$ be 
	$$A=\begin{bmatrix}
		w&1&0\\
		0&w&1\\
		0&1&w
	\end{bmatrix}=\Gamma+wI,$$ where $\Gamma=\begin{bmatrix}
		0&1&0\\
		0&0&1\\
		0&1&0
	\end{bmatrix}$ is the adjacency matrix of the directed graph 
	\begin{center}
		\includegraphics{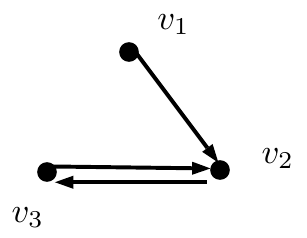}
	\end{center}
	The upper generator vector is ${\bf a} = (w, 0, 1)$ and the lower generator vector is ${\bf b} =(w, 1, 1)$ for the code $\mathcal{C}.$ We get $$\mathcal{C}=\{000, w10, 0w1, 01w, w w^21, w0w, 0w^2w^2, www^2\}.$$
	$\mathcal{C}$ is an additive $(3, 2^3, 2)$ code. Since $d_{max} = 2$ for $n = 3,$ the code $\mathcal{C}$ is optimal.
\end{example}

\begin{example}
	Let the generator matrix of an additive tridiagonal code $\mathcal{C}$ over
	$\mathbb{F}_4$ be
	$$A=\begin{bmatrix}
		w&0&0\\
		1&w&1\\
		0&0&w
	\end{bmatrix}=\Gamma+wI,$$ where $\Gamma=\begin{bmatrix}
		0&0&0\\
		1&0&1\\
		0&0&0
	\end{bmatrix}$ is the adjacency matrix of the directed graph
	\begin{center}
		\includegraphics{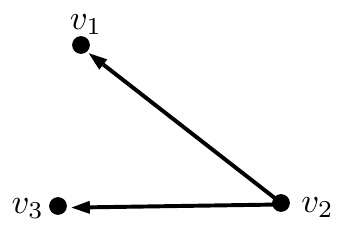}
	\end{center}
	The upper generator vector is $a = (w, 0, 1)$ and the lower generator vector is $b =(w, 1, 0)$ for the code $C.$ We get $$\mathcal{C}=\{000, w00, 1w1, 00w, w^2 w1, w0w, 1ww^2, w^2ww^2\}.$$
	$\mathcal{C}$ is an additive $(3, 2^3, 2)$ code. Since $d_{max} = 1$ for $n = 3,$ the code $\mathcal{C}$ is not optimal.
	
\end{example}
\begin{theorem}
	Let $\mathcal{C}$ be an additive Tridiagonal code over $\mathbb{F}_4$ with generator vector ${\bf v=(a, b)}$ where ${\bf a}=(w, a_1, a_2, \cdots, a_{n-1})$ and ${\bf b}=(w, b_1, b_2, \cdots, b_{n-1}).$ If $a_i=b_{n-i}$ for $i=1, 2, \cdots,n-1,$ then the code $\mathcal{C}$ is a reversible code.
\end{theorem}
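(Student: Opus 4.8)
The plan is to realize the reversal of coordinates as a single matrix operation on the generator matrix $A$ and then to check that, under the stated hypothesis, this operation reproduces $A$ up to a permutation of rows, so that the reversed code is literally the same code.

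Let $J$ denote the $n\times n$ exchange matrix, i.e. the permutation matrix with $1$'s on the anti-diagonal and $0$'s elsewhere; reversing the coordinates of a vector $c$ is exactly the map $c\mapsto cJ$. Since $\mathcal{C}$ is the additive span of the rows of $A$, the reversed code $\{cJ : c\in\mathcal{C}\}$ is the additive span of the rows of $AJ$, and because left multiplication by the permutation matrix $J$ merely permutes rows, it is also the additive span of the rows of $B:=JAJ$. Hence it suffices to show $B=A$; then the reversed code has $A$ as a generator matrix and therefore coincides with $\mathcal{C}$, which is the definition of being reversible.

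Next I would compute $B$ entrywise from $B_{i,j}=A_{\,n+1-i,\;n+1-j}$. The diagonal is unchanged, $B_{i,i}=A_{n+1-i,n+1-i}=w$. For the superdiagonal, $B_{i,i+1}=A_{n+1-i,\,n-i}$, which is the subdiagonal entry $A_{k,k-1}$ of $A$ with $k=n+1-i$, so $B_{i,i+1}=b_{n-i}$ for $1\le i\le n-1$. For the subdiagonal, $B_{i,i-1}=A_{n+1-i,\,n+2-i}$, which is the superdiagonal entry $A_{k,k+1}$ with $k=n+1-i$, so $B_{i,i-1}=a_{n+1-i}$ for $2\le i\le n$; all remaining entries of $B$ vanish. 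Thus $B$ is again the generator matrix of an additive Tridiagonal code whose superdiagonal, read top to bottom, is $(b_{n-1},b_{n-2},\dots,b_1)$ and whose subdiagonal is $(a_{n-1},a_{n-2},\dots,a_1)$.

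Finally I would invoke the hypothesis $a_i=b_{n-i}$ (equivalently $b_j=a_{n-j}$) for $1\le i\le n-1$: it gives $B_{i,i+1}=b_{n-i}=a_i$ and $B_{i,i-1}=a_{n+1-i}=b_{i-1}$, so $B=A$ exactly, and the argument of the first paragraph finishes the proof. The computation is entirely formal; the one place that needs attention is the index bookkeeping in the double reversal $B=JAJ$, where one must keep straight that reversing both rows and columns turns the superdiagonal of $A$ into the reversed subdiagonal of $B$ and vice versa. Once that correspondence is pinned down, the hypothesis matches the two diagonals term by term, so I expect this index check to be the only, and minor, obstacle.
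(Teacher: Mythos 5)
Your proof is correct and takes essentially the same route as the paper: the identity $JAJ=A$ you verify is precisely the paper's observation that, under the hypothesis $a_i=b_{n-i}$, the reversal of the $i$th row of $A$ is its $(n-i+1)$th row, so reversal permutes the additive generators and hence fixes $\mathcal{C}$. Your write-up simply makes the index bookkeeping and the span argument more explicit than the paper does.
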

\begin{proof}
	Let ${\bf v=(a, b)}$ where ${\bf a}=(w, a_1, a_2, \cdots, a_{n-1})$ and ${\bf b}=(w, b_1, b_2, \cdots, b_{n-1})$  with $a_i=b_{n-i}$ for $i=1, 2, \cdots,n-1$ be a generator vector. Then the generator matrix is
	$$A=\begin{bmatrix}
		w&a_1&0&\cdots&0&0\\
		a_{n-1}&w&a_2&\cdots&0&0\\
		0&a_{n-2}&w&\cdots&0&0\\
		\vdots&\vdots&\vdots&\ddots&\vdots&\vdots\\
		0&0&0&\cdots&w&a_{n-1}\\
		0&0&0&\cdots&a_1&w
	\end{bmatrix}_{n\times n}.$$ Since the reversible of $i$th row is the $(n-i+1)$th row,  the code generated by $\mathcal{C}$ is a reversible code.
\end{proof}

\begin{theorem}
	Let $\mathcal{C}_i$ be an $(n, 2^{k_i}, d_i)$ additive Tridiagonal codes over $\mathbb{F}_4$ for $i=1, 2.$  Then $\mathcal{C}_1\times \mathcal{C}_2$ is a $(2n, 2^{k_1+k_2}, \min\{d_1, d_2\})$ additive Tridiagonal code over $\mathbb{F}_4.$ 
\end{theorem}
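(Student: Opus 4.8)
The plan is to realize $\mathcal{C}_1\times\mathcal{C}_2$ concretely as the set of concatenations $\{(u\mid v) : u\in\mathcal{C}_1,\ v\in\mathcal{C}_2\}\subseteq\mathbb{F}_4^{2n}$ and then verify the three assertions in turn: that this is an additive code of length $2n$ with $2^{k_1+k_2}$ codewords, that it admits a Tridiagonal generator matrix, and that its minimum distance equals $\min\{d_1,d_2\}$.

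For the first point, since $\mathcal{C}_1$ and $\mathcal{C}_2$ are subgroups of $\mathbb{F}_4^n$, the product set is closed under coordinatewise addition and hence is a subgroup of $\mathbb{F}_4^{2n}$, i.e.\ an additive code of length $2n$; its cardinality is $|\mathcal{C}_1|\cdot|\mathcal{C}_2|=2^{k_1}2^{k_2}=2^{k_1+k_2}$. Note also that $2n\geq 3$, so the length restriction of Definition~\ref{def1} is met.

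For the Tridiagonal structure, let $A_1,A_2$ be Tridiagonal generator matrices of $\mathcal{C}_1,\mathcal{C}_2$ as in Definition~\ref{def1}, with upper and lower generator vectors ${\bf a}^{(t)}=(w,a^{(t)}_1,\dots,a^{(t)}_{n-1})$ and ${\bf b}^{(t)}=(w,b^{(t)}_1,\dots,b^{(t)}_{n-1})$ for $t=1,2$. I would form the $2n\times 2n$ block-diagonal matrix $G=\begin{bmatrix}A_1&O\\ O&A_2\end{bmatrix}$ and observe that $G$ has $w$ in every main-diagonal position; its superdiagonal reads $(a^{(1)}_1,\dots,a^{(1)}_{n-1},0,a^{(2)}_1,\dots,a^{(2)}_{n-1})$ and its subdiagonal reads $(b^{(1)}_1,\dots,b^{(1)}_{n-1},0,b^{(2)}_1,\dots,b^{(2)}_{n-1})$, all entries lying in $\{0,1\}$ (the single ``seam'' entry joining the two blocks being $0$), while every remaining entry is $0$. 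Hence $G$ is a Tridiagonal generator matrix with upper generator vector $(w,a^{(1)}_1,\dots,a^{(1)}_{n-1},0,a^{(2)}_1,\dots,a^{(2)}_{n-1})$ and lower generator vector $(w,b^{(1)}_1,\dots,b^{(1)}_{n-1},0,b^{(2)}_1,\dots,b^{(2)}_{n-1})$. It then remains only to check that $G$ generates $\mathcal{C}_1\times\mathcal{C}_2$ additively: since $\mathbb{F}_4$ has characteristic $2$, the additive span of the rows of $G$ coincides with their $\mathbb{F}_2$-span, and because the first $n$ rows of $G$ are the rows of $A_1$ padded with zeros and the last $n$ rows are the rows of $A_2$ padded with zeros, that span equals $\{(u\mid v): u\in\mathcal{C}_1,\ v\in\mathcal{C}_2\}=\mathcal{C}_1\times\mathcal{C}_2$.

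For the minimum distance, every nonzero codeword of $\mathcal{C}_1\times\mathcal{C}_2$ has the form $(u\mid v)$ with $(u,v)\neq(0,0)$, and $wt(u\mid v)=wt(u)+wt(v)$. If $v=0$ then $u\neq 0$ and $wt(u\mid v)=wt(u)\geq d_1$; symmetrically $wt(u\mid v)\geq d_2$ when $u=0$; and if both $u$ and $v$ are nonzero then $wt(u\mid v)\geq d_1+d_2$. Hence the smallest nonzero weight is at least $\min\{d_1,d_2\}$, and it is attained by embedding a minimum-weight codeword of whichever of $\mathcal{C}_1,\mathcal{C}_2$ realizes that minimum (as $(u\mid 0)$ or $(0\mid v)$ accordingly). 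Therefore the minimum distance of $\mathcal{C}_1\times\mathcal{C}_2$ is exactly $\min\{d_1,d_2\}$.

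I do not expect a genuine obstacle; the only step requiring care is the middle one — confirming that gluing two Tridiagonal matrices block-diagonally stays inside the Tridiagonal template (which works precisely because $0\in\{0,1\}$ is allowed on the off-diagonals, so the zero seam entries cause no difficulty) and that the resulting matrix generates the full product code rather than a proper subcode.
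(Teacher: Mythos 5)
Your proposal is correct and follows essentially the same route as the paper: both take the block-diagonal generator matrix $\begin{pmatrix} A_1 & 0 \\ 0 & A_2 \end{pmatrix}$ for $\mathcal{C}_1\times\mathcal{C}_2$. The paper states this in one line and asserts the conclusion, whereas you supply the details it leaves implicit (the zero seam entries keeping the matrix in the Tridiagonal template, the cardinality count, and the weight argument for $\min\{d_1,d_2\}$), all of which are accurate.
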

\begin{proof}
	Let $A_i$ be a generator matrix for $\mathcal{C}_i$ for $i=1, 2.$ Then the generator matrix of the code $\mathcal{C}_1 \times \mathcal{C}_2$ is $\begin{pmatrix}
		A_1&0\\0&A_2
	\end{pmatrix}.$ Then $\mathcal{C}_1\times \mathcal{C}_2$ is $(2n, 2^{k_1+k_2}, \min\{d_1, d_2\})$ additive tridiagonal code over $\mathbb{F}_4.$
\end{proof}
\begin{definition}
	Let $\mathcal{C}$ be an $(n, 2^n)$ additive Tridiagonal codes over $\mathbb{F}_4.$  Then  the conjugation of $\mathcal{C}$ is defined by $$\overline{\mathcal{C}}=\{\overline{c}\in \mathbb{F}_4^n\mid c\in \mathcal{C}\}$$ where $\overline{c}=(\overline{c_0}, \overline{c_1}, \cdots, \overline{c_{n-1}})$ and $c=(c_0, c_1, \cdots, c_{n-1}).$
\end{definition}
The proof of the following theorem is simple and hence omitted.
\begin{theorem}
	Let $\mathcal{C}$ be an $(n, 2^n)$ additive Tridiagonal codes over $\mathbb{F}_4$ with generator matrix $A$ and generator vector $v=(a, b)$ where $a=(w, a_1, a_2, \cdots, a_{n-1})$ and $b=(w, b_1, b_2, \cdots, b_{n-1}).$ Then the generator matrix of the conjugation code $\overline{\mathcal{C}}$ is $I+A$ where $I$ is the $n\times n$ identity matrix.
\end{theorem}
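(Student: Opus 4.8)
The plan is to reduce the statement to one structural observation: conjugation commutes with additive generation, and it fixes all the off-diagonal entries of $A$. First I would recall that the map $x\mapsto\overline{x}=x^{2}$ is the Frobenius automorphism of $\mathbb{F}_{4}$; in particular it is additive over $\mathbb{F}_{2}$, so $\overline{x+y}=\overline{x}+\overline{y}$ for all $x,y\in\mathbb{F}_{4}$. Consequently, if $r_{1},\dots,r_{n}$ denote the rows of $A$, which by hypothesis span $\mathcal{C}$ additively, then for every $\mathbb{F}_{2}$-combination $c=\sum_{i}\epsilon_{i}r_{i}$ one has $\overline{c}=\sum_{i}\epsilon_{i}\overline{r_{i}}$, where $\overline{r_{i}}$ is obtained by conjugating each coordinate of $r_{i}$. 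Hence $\overline{\mathcal{C}}=\{\overline{c}\mid c\in\mathcal{C}\}$ is exactly the additive span of $\overline{r_{1}},\dots,\overline{r_{n}}$, i.e.\ of the rows of the matrix $\overline{A}$ obtained from $A$ by entrywise conjugation.

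Second, I would compute $\overline{A}$ directly from Definition~\ref{def1}. The nonzero off-diagonal entries of $A$ are the $a_{i}$ and $b_{i}$, all lying in $\{0,1\}\subset\mathbb{F}_{4}$, and these are fixed by conjugation since $\overline{0}=0$ and $\overline{1}=1$; the zero entries are of course unchanged. The only entries that move are the diagonal ones, each equal to $w$, and $\overline{w}=w^{2}=1+w$. Writing $A=\Gamma+wI$ as in the definition, it follows that
\[
\overline{A}=\Gamma+(1+w)I=(\Gamma+wI)+I=A+I=I+A .
\]

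Combining the two steps, $\overline{\mathcal{C}}$ is generated additively by the rows of $I+A$. To finish, note that conjugation is a bijection of $\mathbb{F}_{4}^{n}$ preserving the group structure, so $|\overline{\mathcal{C}}|=|\mathcal{C}|=2^{n}$, which shows that $I+A$ is genuinely a generator matrix of the $(n,2^{n})$ code $\overline{\mathcal{C}}$, as claimed. I do not expect any real obstacle in this argument; the one point that must be stated carefully, rather than glossed over, is that it uses the hypothesis $a_{i},b_{i}\in\{0,1\}$ in an essential way — it is precisely this that makes the entire off-diagonal part of $A$ conjugation-invariant, so that conjugation affects only the constant diagonal block $wI$ and therefore amounts exactly to adding the identity matrix.
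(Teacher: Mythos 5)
Your argument is correct: conjugation is the Frobenius automorphism, hence additive and $\mathbb{F}_2$-fixing, so $\overline{\mathcal{C}}$ is additively spanned by the conjugated rows, and since the off-diagonal entries $a_i,b_i\in\{0,1\}$ are fixed while $\overline{w}=1+w$, entrywise conjugation of $A=\Gamma+wI$ gives exactly $I+A$. The paper omits the proof as ``simple,'' and your argument is evidently the intended one, with the extra care about $|\overline{\mathcal{C}}|=2^n$ being a welcome addition.
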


Since we have two choices for each $a_i,$  $1 \leq i \leq n - 1,$  in the upper generator vector $a =(w, a_1,  \cdots, a_{n-1}),$ the number of upper generator vectors except for the vector $(w, 0, 0, \cdots, 0)$ is $2^{n-1} - 1$. Similarly, the number of lower generator vectors is $2^{n-1} - 1$. We excluded the codes such that their upper or lower generator vectors
are $(w, 0, 0, \cdots, 0)$ since the minimum distances of these codes are $1.$ So, there are $(2^{n-1} - 1)^2$ additive Tridiagonal codes of length $n,$ some of them may be equivalent.

\section{Double-Tridiagonal Codes over $\mathbb{F}_4$}
In this section, we introduce Double-Tridiagonal codes and discuss the applications to secret sharing scheme based on matrix projection.
\begin{definition}
	A linear code $\mathcal{C}$ of length $2n$ is said to be {\it Double-Tridiagonal} if the generator matrix $G$ of $\mathcal{C}$ is of the form $(I\mid A)$ where $I$ is the identity matrix of size $n\times n$ and $A$ is a $n\times n$ generator matrix of a additive Tridiagonal code. 
\end{definition}
\begin{example}
	Let $A=\begin{bmatrix}
		w&1&0\\
		0&w&1\\
		0&1&w
	\end{bmatrix}$ be a generator matrix of an additive Tridiagonal code. Then $$G=\begin{bmatrix}
		1&0&0&|& w&1&0\\
		0&1&0&|&0&w&1\\
		0&0&1&|&0&1&w
	\end{bmatrix}$$ is a generator matrix of a Double-Tridiagonal code $\mathcal{C}$ over $\mathbb{F}_4.$ 
	
\end{example}
We denote $G=(I\mid A)$ is a generator matrix of Double-Tridiagonal matrix of size $n\times 2n$  where $I$ is the identity matrix of size $n\times n$ and $A$ is a generator matrix of a additive Tridiagonal code of size $n\times n.$
\subsection{Matrix Projection of Transpose of Generator Matrix $G$} For the transpose of the generator matrix $G$ which is of the order $2n\times n$ having rank $n,$ we define $\mathbb{S}=H(H^TH)^{-1}H^T$ where $H=G^{T}$ is the transpose of the generator matrix $G$. The $2n\times2n$ matrix $\mathbb{S}$ is the projection matrix of $H$ and denote $\mathbb{S}=Proj(H)$.

Let $x_{i}$'s be linearly independent $n\times 1$ vectors for $1\leq{i}\leq{n}.$ Now compute $v_{i}=Hx_{i}$ for all $1\leq{i}\leq{n}.$ These $2n\times1$ vectors $v_{i}$ can be arranged as a matrix of the form $K=[v_{1}\, v_{2}\, \cdots\, v_{n}].$
\begin{theorem} For a $2n\times n$ matrix $H$ of rank $n$ and a $2n\times n$ matrix $K=[v_{1}\,  v_{2}\, \cdots\, v_{n}]$ where $v_{i}=H x_{i}$  and $x_{i}$'s are  linearly independent $n\times 1$ vectors for $1\leq{i}\leq{n}.$ Then the projection of the matrices $H$ and $K$ are the same. That is, $Proj(H)=Proj(K).$
\end{theorem}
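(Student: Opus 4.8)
The plan is to reduce the claim to a single algebraic identity: since the $x_i$ are linearly independent, the $n\times n$ matrix $X=[x_1\ x_2\ \cdots\ x_n]$ is invertible, and $K=[Hx_1\ Hx_2\ \cdots\ Hx_n]=HX$. So everything follows by writing $\mathrm{Proj}(K)$ in terms of $H$ and $X$ and cancelling the $X$-factors. First I would record the two invertibility facts that make the projections well defined: $H$ has rank $n$ (full column rank), so $H^{T}H$ is invertible; and $X$ is invertible by linear independence of its columns. These are exactly the hypotheses needed for $\mathrm{Proj}(H)$ and $\mathrm{Proj}(K)$ to exist.

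Next I would compute $K^{T}K=(HX)^{T}(HX)=X^{T}H^{T}HX$, which is invertible as a product of three invertible matrices, with inverse $(K^{T}K)^{-1}=X^{-1}(H^{T}H)^{-1}(X^{T})^{-1}$. Substituting this into the definition of the projection gives
$$\mathrm{Proj}(K)=K(K^{T}K)^{-1}K^{T}=HX\,X^{-1}(H^{T}H)^{-1}(X^{T})^{-1}\,X^{T}H^{T}.$$
Then $X X^{-1}=I$ and $(X^{T})^{-1}X^{T}=I$ collapse the middle, leaving
$$\mathrm{Proj}(K)=H(H^{T}H)^{-1}H^{T}=\mathrm{Proj}(H),$$
which is the assertion. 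For completeness I would also remark that the specific structure of $H=G^{T}=\begin{pmatrix}I\\ A^{T}\end{pmatrix}$ (so that $H^{T}H=I+A^{T}A$) is not needed here; only $\mathrm{rank}(H)=n$ is used, and this holds automatically because $G=(I\mid A)$ contains an identity block.

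There is no serious obstacle in this argument — the whole content is the cancellation $HX\cdot X^{-1}=H$ and $(X^{T})^{-1}\cdot X^{T}=I$. The one point that deserves a careful sentence rather than a silent step is the factorization of the inverse: $(X^{T}H^{T}HX)^{-1}=X^{-1}(H^{T}H)^{-1}(X^{T})^{-1}$ requires that each factor be square and invertible, which is why I would isolate the rank and linear-independence hypotheses at the very start. (Implicit throughout, as in the paper's definition of $\mathrm{Proj}$, is that we work over a ground set where $H^{T}H$ is actually invertible; I would state this as a standing assumption rather than derive it from rank alone.)
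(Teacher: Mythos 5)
Your proposal is correct and follows essentially the same route as the paper: form $X=[x_1\,x_2\,\cdots\,x_n]$, note $K=HX$ with $X$ invertible, and cancel the $X$-factors inside $\mathrm{Proj}(K)=HX(X^{T}H^{T}HX)^{-1}X^{T}H^{T}$ to recover $\mathrm{Proj}(H)$. Your extra remarks on why $H^{T}H$ and $X$ are invertible only make explicit what the paper's proof uses silently.
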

\begin{proof}
	Let $H$ be an $2n \times n$ matrix of rank $n$ and let $x_i, 1\leq i\leq n,$ be any linearly independent vectors in $\mathbb{F}_2^n.$ Define $v_{i}=H x_{i}$ for $1\leq i\leq n.$ Let $K=[v_{1}\,  v_{2}\, \cdots\, v_{n}].$ Then
	\begin{equation}\label{equation 1}
		K=[v_{1}\,  v_{2}\, \cdots\,v_{n}]=H[x_{1}\, x_{2}\, \cdots \, x_{n}]
	\end{equation}
	Let $X=[x_{1}\, x_{2}\, \cdots \,x_{n}]$. Then the $n\times n$ matrix $X$ is a full rank matrix since each $n$ columns are linearly independent.
	From Equation \ref{equation 1}, we get
	$ K=HX.$\\
	Consider, \begin{align*}
		Proj(K)&=K(K^{T}K)^{-1}K^{T}\\
		&=HX((HX)^{T}HX)^{-1}(HX)^{T}\\
		&=HX(X^{T}H^{T}HX)^{-1}(HX)^{T}\\
		&=HXX^{-1}(H^{T}H)^{-1}(X^{T})^{-1}X^{T}H^{T}\\
		&=H(H^{T}H)^{-1}H^{T}\\
		&=Proj(H).
	\end{align*}
	Thus both $H$ and $K$ have the same projection.
\end{proof}
\section{A $(n, m)$-Threshold Secret Sharing Scheme based on Double-Tridiagonal Code}

In this section, we discuss a $(n, m)$-Threshold secret sharing scheme based on matrix projection using transpose of the generator matrix of the Double-Tridiagonal code.\\

Now we shall form a secret sharing scheme based on the Double-Tridiagonal code by using the concept of matrix projection. The construction of shares for a secret matrix $S$ over $\mathbb{F}_{2}$ of order $2n\times 2n$ can be done by considering the transpose $H$ of the generator matrix $G$ of the Double-Tridiagonal code. Choose $m$ random $n\times 1$ vectors $x_{i}$ such that any $n$ are linearly independent. Now calculate the shares $v_{i}=Hx_{i}\,( mod \ 2)$ for each of the $m$ participants $1\leq{i}\leq{m}$. The maximum possibility for $m$ is $(2^{n}-1)(2^{n}-2)\cdots (2^{n}-2^{n-1})$. Then compute the $Proj(H)=\mathbb{S}.$ Let the matrix $R=(S-\mathbb{S})\ (mod\ 2)$. Distribute the $m$ shares say $v_{i}$'s for the $m$ participants and make $R$ to be public.

To reconstruct the secret $S$ with $n$ or more shares $v_{i}$, first construct a matrix $K=[v_{1}\,  v_{2}\, \cdots\, v_{n}]$ using the $n$ shares. Then calculate $Proj(K)$ which is equal to  $\mathbb{S}$. Now compute the secret $S=(\mathbb{S}+R)\ (mod\ 2).$ Having $n-1$ or fewer shares, one will not be able to recreate the secret since the projection of $H$ cannot be defined in this case, as $H^{T}H$  becomes singular. 

\subsection{Example for Secret Sharing Scheme based on Double-Tridiagonal Code}
A $2$-Threshold easy example can be shown with secret matrix $$S=\begin{bmatrix}
	0&1&1&0\\
	1&0&0&1\\
	0&1&1&0\\
	1&1&0&1
\end{bmatrix}.$$
Let the generator matrix for the Double-Tridiagonal code be
$$G=\begin{bmatrix}
	1&0&|&w&1\\
	0&1&|&1&w
\end{bmatrix}.$$
Then its transpose is $$H=\begin{bmatrix}
	1&0\\
	0&1\\
	w&1\\
	1&w
\end{bmatrix}.$$ 
The projection of this matrix $H$ can be easily calculated as $$\mathbb{S}=(H(H^{T}H)^{-1}H^{T})\ (mod\ 2) =\begin{bmatrix}
	w&0&w^{2}&w\\
	0&w&w&w^{2}\\
	w^{2}&w&1+w&0\\
	w&w^{2}&0&1+w
\end{bmatrix}.$$
Now let us compute the matrix \begin{align*}R&=(S-\mathbb{S})\ (mod\ 2)=\begin{bmatrix}
		-w&1&1-w^{2}&-w\\
		1&-w&-w&1-w^{2}\\
		-w^{2}&1-w&-w&0\\
		1-w&1-w^{2}&0&-w
	\end{bmatrix}\\ &=\begin{bmatrix}
		1+w^{2}&1&w&1+w^{2}\\
		1&1+w^{2}&1+w^{2}&w\\
		1+w&w^{2}&1+w^{2}&0\\
		w^{2}&w&0&1+w^{2}
	\end{bmatrix}.\end{align*}
which will be made to be public.
We shall choose two linearly independent $2\times 1$ vectors say $$x_{1}=\begin{bmatrix} 
	1\\
	1
\end{bmatrix} \text{ and } x_{2}=\begin{bmatrix} 
	0\\
	1
\end{bmatrix}$$
Next compute $v_{i}=H x_{i}$ for $i=1,2$.
So $$v_{1}=\begin{bmatrix} 
	1\\
	1\\
	1+w\\
	1+w
\end{bmatrix} \text{ and } v_{2}=\begin{bmatrix} 
	0\\
	1\\
	1\\
	w
\end{bmatrix}.$$ These vectors which were given to the participants can be used to recreate the secret $S$.
By the collaboration of these vectors,  we will get the matrix $$K=\begin{bmatrix} 
	1&0\\
	1&1\\
	1+w&1\\
	1+w&w
\end{bmatrix}$$
which has projection

$$\mathbb{S}=(H(H^{T}H)^{-1}H^{T})\ mod\ 2 =\begin{bmatrix}
	w&0&w^{2}&w\\
	0&w&w&w^{2}\\
	w^{2}&w&1+w&0\\
	w&w^{2}&0&1+w
\end{bmatrix}.$$
The retrieval of secret $S$ can be done as \begin{align*}
	S&=(\mathbb{S}+R)\ mod\ 2\\&=\begin{bmatrix}
		1+w+w^{2}&1&w^{2}+w^{2}&1+w+w^{2}\\
		1&1+w+w^{2}&1+w+w^{2}&w^{2}+w\\
		1+w+w^{2}&w^{2}+w&w^{2}+w&0\\
		w^{2}+w&w^{2}+w&0&w^{2}+w
	\end{bmatrix}\\&=\begin{bmatrix}
		0&1&1&0\\
		1&0&0&1\\
		0&1&1&0\\
		1&1&0&1
	\end{bmatrix}.
\end{align*}

In this example, if we have only one vector $v_{1}$ instead of two linearly independent vectors $v_{1}$ and $v_{2}$, we will not be able to recreate the secret matrix $S$.
\section*{Conclusion}
In this paper, we introduced and studied a additive Tridiagonal and Double-Tridiagonal codes over the finite field $\mathbb{F}_4.$ Also, we counted the number of additive Tridiagonal codes over $\mathbb{F}_4.$ Finally, we stated an application of Double-Tridiagonal codes to secret sharing scheme based on  matrix projection.


\begin{thebibliography}{99}
 \bibitem{cal} A. R. Calderbank, E. M. Rains, P. M. Shor and N. J. A. Sloane, Quantum error correction
via codes over GF(4), IEEE Trans. Inform. Theory, 44 (1998), 1369-1387.

\bibitem{kim} J. L. Kim and N. Lee, secret sharing schemes based on additive codes over  $\mathbb{F}_4,$ Appl. Algebra Eng. commun. Comput. 28(1) 2017, 79-97.
\bibitem{can} J. Cannon, W. Bosma, C. Fieker and A. Steel, Handbook of Magma Functions, Version 2.19,
Sydney, 2013.
\bibitem{dan} L. E. Danielsen and M. G. Parker, Directed graph representation of half-rate additive codes over GF(4), Des. Codes Cryptogr., 59 (2011), 119-130.
\bibitem{par} L. E. Danielsen and M. G. Parker, On the classification of all self-dual additive codes over
GF(4) of length up to 12, J. Combin. Theory Ser. A, 113 (2006), 1351-1367.
\bibitem{gab} P. Gaborit, W. C. Huffman, J. L. Kim and V. Pless, On additive GF(4) codes, DIMACS Workshop Codes Assoc. Schemes, DIMACS Ser. Discr. Math. Theoret. Comp. Sci., Amer.
Math. Soc., 56 (2001), 135-149.
\bibitem{gul} T. A. Gulliver and J.-L. Kim, Circulant based extremal additive self-dual codes over GF(4),
IEEE Trans. on Inform. Theory, 50 (2004), 359-366.
\bibitem{hon} G. Hohn, Self-dual codes over the Kleinian four group, Math. Ann., 327 (2003), 227-255.
\bibitem{ost} P. R. J. Ostergard, Classifying subspaces of Hamming spaces, Des. Codes Cryptogr., 27
(2002), 297-305.
\bibitem{pless} V. S. Pless and W. C. Huffman, Handbook of Coding Theory, North-Holland, Amsterdam,
1998.
\bibitem{var} Z. Varbanov, Some new results for additive self-dual codes over GF(4), Serdica J. Comput.,
1 (2007), 213-227.
\bibitem{murat} Murat Şahİn and  Hayrullah Özİmamoğlu, Additive Toeplitz codes over GF(4), Advances in Mathematics of Communications, 14(2), 2020, 379-395.

\bibitem{taher} Taher Abualrub, Yonglin Cao and Steven T.Dougherty, Algebraic structure of additive conjucyclic codes over $\mathbb{F}_4,$ Finite Fields and Their Applications, 65(2020), 101678.
\bibitem{bai} Li Bai, A strong ramp secret sharing scheme using matrix,In 2006 International Symposium on a World of Wireless, Mobile and Multimedia Networks,(WoWMoM 06),(2006),IEEE
\bibitem{adi} A.Shamir, How to share a secret, Communications of the ACM 22, no. 11 (1979): 612-613.
\bibitem{blak}G.R. Blakley. Safeguarding crypographic keys. In Proceedings
of the AFIPS 1979 National Computer Conference, volume
48, pages 313–317, Arlington, VA, June 1997.
\bibitem{sri}K. Srinathan, N. Tharani Rajan, and C. Pandu Rangan. Nonperfect
secret sharing over general access structures. In INDOCRYPT,
pages 409–421, 2002.
\bibitem{pai}P. Paillier. On ideal non-perfect secret sharing schemes. In
Security Protocols Workshop, pages 207–216, 1997.
\bibitem{yama}H. Yamamoto. On secret sharing systems using (k,l,n)
threshold scheme. Electronics and Communications in Japan, Part I, 69(9):46–54, 1986.
\bibitem{mig}T. Migler, K. E. Morrison, and M. Ogle. Weight and rank of
matrices over finite fields, September 2003.
\bibitem{san}A.De Santis and B.Masucci,  Multiple ramp schemes, IEEE Transactions on Information Theory, 45(5), pp.1720-1728, July 1999.
\bibitem{mea}G. R. Blakley and C. Meadows, Security of ramp schemes,
In G. R. Blakley and D. Chaum, editors, Advances in Cryptology
– Crypto ’84, Aug. 1984

\end{thebibliography}
\end{document}